\newtheorem{theorem}{Theorem}[section]
\newtheorem{lemma}[theorem]{Lemma}
\newtheorem{corollary}[theorem]{Corollary}
\newtheorem{proposition}[theorem]{Proposition}
\theoremstyle{definition}
\newtheorem{notation}[theorem]{Notation}
\begin{document}

\author{Jason P. Bell}
\address{Department of Pure Mathematics \\
University of Waterloo\\
Waterloo, ON  N2L 3G1 \\
Canada}
\subjclass[2010]{11B85, 68Q45}

\keywords{upper density, automatic sets, Cobham's theorem, formal languages.}

\thanks{The author was supported by NSERC Grant RGPIN-2016-03632.}

\title{The upper density of an automatic set is rational}

\begin{abstract} Given a natural number $k\ge 2$ and a $k$-automatic set $S$ of natural numbers, we show that the lower density and upper density of $S$ are recursively computable rational numbers and we provide an algorithm for computing these quantities.  In addition, we show that for every natural number $k\ge 2$ and every pair of rational numbers $(\alpha,\beta)$ with $0<\alpha<\beta<1$ or with $(\alpha,\beta)\in \{(0,0),(1,1)\}$ there is a $k$-automatic subset of the natural numbers whose lower density and upper density are $\alpha$ and $\beta$ respectively, and we show that these are precisely the values that can occur as the lower and upper densities of an automatic set.  \end{abstract}

\maketitle
\section{Introduction}
Given a subset $S$ of the natural numbers, a natural question to ask is: \emph{What proportion of natural numbers lie in $S$?}
To answer this, one lets $\pi_S(x)$ denote the number of elements of $S$ that are less than $x$ and one then studies how $\pi_S(x)/x$ behaves as $x$ tends to infinity.  In general, $\lim_{x\to\infty} \pi_S(x)/x$ need not exist, but when it does, we call the limit the \emph{density} of the set of $S$.  To deal with the fact that densities of sets of natural numbers need not exist, one can instead consider the \emph{upper} and \emph{lower densities} of the set $S$, given respectively by $\limsup_{x\to\infty} \pi_S(x)/x$ and $\liminf_{x\to\infty} \pi_S(x)/x$, which together provide a rough answer to the motivating question asked above. 

The lower density and upper density of a set of natural numbers are real numbers $\alpha$ and $\beta$ in $[0,1]$ with $\alpha\le \beta$, and given $(\alpha,\beta)$ satisfying these conditions there exists a set whose lower and upper densities are $\alpha$ and $\beta$, respectively.  When one restricts one's attention to so-called automatic sets---that is, subsets of the natural numbers whose elements are precisely those whose base-$k$ expansions are accepted by a finite-state machine for some $k\ge 2$---the questions of density become significantly more constrained.  For example, when $S$ is an automatic set whose density exists, a result of Cobham \cite{Cobham} shows that the density is necessarily rational.  We give further background on automatic sets and automata in \S\ref{background}.

In general, the density of an automatic set of natural numbers need not exist.  As an example,
let $S$ denote the set of numbers whose base-$k$ expansion has even length. If we let $\pi_S(x)$ denote the number of elements of $S$ that are less than $x$, then 
$$\pi_S(k^{2n}) = 1+\sum_{j=1}^n (k^{2j}-k^{2j-1}) \sim \left(\frac{k}{k+1}\right)\cdot k^{2n},$$ while
$$\pi_S(k^{2n+1}) = 1+\sum_{j=1}^n (k^{2j}-k^{2j-1}) \sim \left(\frac{1}{k+1}\right)\cdot k^{2n+1},$$ and so the lower density of $S$ is at most $1/(k+1)$ and the upper density is at least $k/(k+1)$.  (In \S\ref{computable} we give a more general construction that shows that $1/(k+1)$ and $k/(k+1)$ are respectively the lower and upper densities of $S$.)  One can nevertheless ask what one can say about the upper and lower densities of an automatic set.  In this paper we answer this question completely, showing the lower and upper densities are recursively computable rational numbers in $[0,1]$ and, moreover, we characterize exactly which pairs $(\alpha,\beta)$ can be realized as the lower density and upper density of an automatic set.  

The first part of this characterization is given by the following more general result concerning automatic sequences.  

\begin{theorem} \label{thm:main} Let $k\ge 2$ be a natural number and let $h:\mathbb{N}\to \Delta\subseteq \mathbb{Q}_{\ge 0}$ be a $k$-automatic sequence and let $s(n)=\sum_{j<n} h(n)$.  Then $$\limsup_{n\to \infty} s(n)/n$$ is a recursively computable rational number.
\end{theorem}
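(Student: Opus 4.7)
The plan is to work with a DFAO $\mathcal{A} = (Q, \{0,\ldots,k-1\}, \delta, q_0, \tau)$ producing $h$, where $\tau : Q \to \Delta$. For $n$ with base-$k$ digits $d_{\ell-1}\cdots d_0$, a standard decomposition of $\{j : j < n\}$ by the top-most position at which $j$ first differs from $n$ yields
\[
s(n) \;=\; \sum_{i=0}^{\ell-1} \sum_{d=0}^{d_i-1} T\bigl(\delta(p_{i+1},d),\,i\bigr),
\]
where $p_j := \delta(q_0, d_{\ell-1}\cdots d_j)$ is the state after reading the top $\ell-j$ digits and $T(q,i) := \sum_{w \in \{0,\ldots,k-1\}^i} \tau(\delta(q, w))$. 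Setting $M_{q,q'} := |\{d : \delta(q,d) = q'\}|$, one has $\vec{T}_i = M^i\vec{\tau}$; since $M/k$ is row-stochastic, Perron--Frobenius applied to its recurrent classes yields a computable period $p \geq 1$, a constant $\rho < 1$, and rational vectors $\vec{v}_0,\ldots,\vec{v}_{p-1}$ with $T(q,i) = k^i\, v_{q,\,i \bmod p} + O((k\rho)^i)$. Summing over $i$, the aggregate error in $s(n)$ is $o(n)$.

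Re-indexing by the top-down position $j := \ell-1-i$, and writing $e_j := d_{\ell-1-j}$, $s_j := \delta(q_0, e_0\cdots e_{j-1})$, $t_j := (\ell-1-j) \bmod p$, and $A(q,d,t) := \sum_{d'<d} v_{\delta(q,d'),\,t}$, one obtains
\[
\frac{s(n)}{n} \;=\; \frac{\sum_{j=0}^{\ell-1} k^{-j}\, A(s_j,e_j,t_j)}{\sum_{j=0}^{\ell-1} k^{-j}\, e_j} \;+\; o(1).
\]
A diagonal/compactness argument on the space $\{0,\ldots,k-1\}^{\mathbb{N}}\times \mathbb{Z}/p\mathbb{Z}$ (passing to subsequences along which the top-down digits of $n_m$ stabilize and $\ell_m \bmod p$ is fixed) identifies $\limsup_n s(n)/n$ with the supremum of the infinite-series analogue
\[
L\bigl((e_j), t_0\bigr) \;=\; \frac{\sum_{j\geq 0} k^{-j}\, A(s_j,e_j,t_j)}{\sum_{j\geq 0} k^{-j}\, e_j}
\]
taken over admissible infinite digit sequences $(e_j)_{j\geq 0}$ with $e_0 \neq 0$ and residues $t_0 \in \mathbb{Z}/p\mathbb{Z}$ (the $t_j$ being forced by $t_{j+1} = (t_j-1)\bmod p$).

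This supremum is a fractional, discounted optimization on the finite directed graph with vertex set $Q \times \mathbb{Z}/p\mathbb{Z}$ and discount factor $1/k$, which I would solve by the parametric Bellman approach. For each rational $\lambda$, the value function
\[
V_\lambda(q,t) := \sup_{(e_j)} \sum_{j\geq 0} k^{-j}\bigl(A(s_j,e_j,t_j) - \lambda e_j\bigr)
\]
solves the contracting equation $V_\lambda(q,t) = \max_{d}\bigl[A(q,d,t) - \lambda d + k^{-1} V_\lambda(\delta(q,d),(t-1)\bmod p)\bigr]$; its unique solution is piecewise linear in $\lambda$ with rational breakpoints, being the pointwise maximum, over the finitely many stationary deterministic policies on the graph, of affine-in-$\lambda$ functions with rational coefficients. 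The optimum $\lambda^* = \limsup s(n)/n$ is then the (monotone) threshold at which $\max_{t_0,\,d_0\neq 0}\bigl[A(q_0,d_0,t_0) - \lambda d_0 + k^{-1} V_\lambda(\delta(q_0,d_0),(t_0-1)\bmod p)\bigr]$ vanishes, and is therefore a computable rational. I expect the main obstacle to be making the compactness/diagonalization step fully rigorous while uniformly controlling the aggregate effect of the $O((k\rho)^i)$ error terms, so that the discrete limit superior really does coincide with the continuous supremum of $L$ on the full shift.
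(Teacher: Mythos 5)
Your argument is correct in outline and, after a shared first step, takes a genuinely different route from the paper. The first half---expanding $s(n)$ over the highest digit position at which $j<n$ branches off from $n$, and extracting the asymptotics $T(q,i)=k^i v_{q,\,i\bmod p}+O((k\rho)^i)$ with computable rational vectors $v$---is essentially the paper's Proposition \ref{thm:perron}; note that the rationality of the limit vectors and the triviality of the peripheral Jordan blocks are not immediate from Perron--Frobenius alone, and the paper obtains them via Karpelevi\u{c}'s theorem, a Galois averaging argument, and the trivial bound $|T(q,i)|\le Ck^i$, so this step of yours needs the same fleshing out. From there the paper proceeds combinatorially: it takes words realizing the limsup, applies the pigeonhole principle to the states to factor them as $UVW$ with $\delta(1,U)=\delta(1,UV)=\delta(1,UVW)$, proves a pumping lemma (Lemma \ref{lem:key}, resting on the inequality Lemma \ref{limsup}) giving $\gamma_j(UV^mW)=\gamma$ for all $m$, and then evaluates $\lim_m\gamma_j(UV^m)$ in closed form as a rational number, reducing everything to a finite maximum over pairs $(A,B)$ with $|AB|\le(2d+1)a$. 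You instead identify $\limsup_n s(n)/n$ with the supremum of a discounted fractional objective over the full shift---and this identification does go through: the tails beyond position $J$ contribute $O(k^{-J})$ to numerator and denominator while the denominator is at least $e_0\ge 1$, and the aggregate Perron error is $O(\rho^{\ell})=o(1)$---and then solve the optimization by the Dinkelbach/parametric-Bellman method, invoking the existence of optimal stationary policies for discounted deterministic dynamic programming on the finite graph $Q\times\mathbb{Z}/p\mathbb{Z}$ to see that $F(\lambda)$ is a finite upper envelope of rational affine functions with a unique, computable rational zero. Both proofs ultimately locate the optimum on an eventually periodic digit string---the paper by pumping, you by stationarity of optimal policies---but your method buys a cleaner and likely more efficient algorithm (root-finding for a piecewise-linear convex function via policy iteration or linear programming, rather than enumerating all words of length up to $(2d+1)a$), at the cost of importing standard but nontrivial facts about discounted dynamic programming and of the extra care in the limsup-to-supremum identification that you yourself flag, which is, as sketched above, entirely manageable.
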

We make some remarks about what is meant by `recursively computable' in the statement of Theorem \ref{thm:main}.  Given a $k$-automatic sequence $h$, one can build a deterministic finite automaton with output $\Gamma$, which takes the base $k$-expansion of $n$ as input, reading left to right, and gives $h(n)$ as output.  We give an algorithm that allows one to determine the limsup of $s(n)/n$ from the machine $\Gamma$ (see \S\ref{computable} for further details).  

As an immediate corollary of Theorem \ref{thm:main}, we obtain the following result.
\begin{corollary} \label{cor:main2} Let $k\ge 2$ be a natural number and let $S\subseteq \mathbb{N}$ be a $k$-automatic set.  Then the upper and lower densities of $S$ are recursively computable rational numbers.  
\end{corollary}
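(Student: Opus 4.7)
The plan is to derive Corollary \ref{cor:main2} from Theorem \ref{thm:main} applied twice: once to the indicator sequence of $S$, to obtain the upper density, and once to the indicator sequence of the complement, to obtain the lower density.

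First, since $S$ is $k$-automatic, there is a DFA accepting the base-$k$ representations of members of $S$. Declaring the output of each accepting state to be $1$ and of each non-accepting state to be $0$ converts this into a DFA with output $\Gamma_S$ computing the indicator sequence $\chi_S : \mathbb{N} \to \{0,1\} \subseteq \mathbb{Q}_{\geq 0}$. In particular, $\chi_S$ is a $k$-automatic sequence taking values in a finite subset of $\mathbb{Q}_{\geq 0}$, so Theorem \ref{thm:main} applies. Since $s(n) := \sum_{j<n} \chi_S(j) = \pi_S(n)$, the theorem immediately yields that the upper density
\[
\limsup_{n\to\infty} \frac{\pi_S(n)}{n}
\]
is a recursively computable rational number, with the algorithm of Theorem \ref{thm:main} run on $\Gamma_S$ producing it.

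For the lower density I would use the elementary identity
\[
\liminf_{n\to\infty} \frac{\pi_S(n)}{n} \;=\; 1 - \limsup_{n\to\infty} \frac{n-\pi_S(n)}{n} \;=\; 1 - \limsup_{n\to\infty} \frac{\pi_{\mathbb{N}\setminus S}(n)}{n}.
\]
The complement $\mathbb{N}\setminus S$ is again $k$-automatic (swap the sets of accepting and non-accepting states in the DFA for $S$), so the argument of the previous paragraph applies to $\mathbb{N}\setminus S$ and produces its upper density as a recursively computable rational number. Subtracting from $1$ then gives the lower density of $S$ as a recursively computable rational number, completing the proof.

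There is no real obstacle here: Theorem \ref{thm:main} performs all the substantive work, and the reduction is purely formal. The only minor bookkeeping point is the passage from a DFA recognizing a set to a DFA-with-output computing its indicator sequence, together with the standard convention concerning leading zeros in base-$k$ expansions, both of which are routine features of the theory of $k$-automatic sequences.
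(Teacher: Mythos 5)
Your proposal is correct and matches the paper's own argument: the paper likewise applies Theorem \ref{thm:main} to the characteristic function of $S$ to get the upper density, and to the characteristic function of the complement, using the identity $\liminf \pi_S(n)/n = 1 - \limsup \pi_{\mathbb{N}\setminus S}(n)/n$ to get the lower density. No gaps.
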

We point out that Corollary \ref{cor:main2} can be seen as an extension of the result of Cobham \cite{Cobham} mentioned earlier about the density of an automatic set of natural numbers, when it exists.  Since the density exists if and only if the upper and lower densities coincide, this follows from Corollary \ref{cor:main2}.  As the example given in which $S$ is the set of numbers whose base $k$-expansion has even length illustrates, our result is strictly stronger than Cobham's result.

In light of Corollary \ref{cor:main2}, it is natural to ask which possible pairs $(\alpha,\beta)\in \mathbb{Q}^2$ can occur as the lower and upper densities of an automatic set.  We are able to completely characterize which pairs can occur. 
\begin{theorem}\label{thm:main3}
 Let $k\ge 2$ be a positive integer and let $(\alpha,\beta)$ be a pair of rational numbers satisfying either $0<\alpha\le \beta <1$ or $(\alpha,\beta)\in \{(0,0),(1,1)\}$.  Then there is a $k$-automatic set $S$ whose lower density and upper density are $\alpha$ and $\beta$ respectively.  Conversely, if $S$ is a $k$-automatic set whose lower density is $\alpha$ and whose upper density is $\beta$ then either $(\alpha,\beta)\in \{(0,0),(1,1)\}$ or $\alpha,\beta$ are rational numbers with $0<\alpha\le \beta<1$.
 \end{theorem}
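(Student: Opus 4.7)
\textbf{Proof plan for Theorem~\ref{thm:main3}.}

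The theorem splits into a characterization (converse) and an existence (realization) part, which I would attack separately. For the converse, Corollary~\ref{cor:main2} already gives the rationality of $\alpha,\beta\in[0,1]$, and $\alpha\le\beta$ is immediate. What remains is to rule out pairs $(0,\beta)$ with $\beta>0$ and $(\alpha,1)$ with $\alpha<1$. Since the complement of a $k$-automatic set is $k$-automatic, and taking complements sends $(\alpha,\beta)$ to $(1-\beta,1-\alpha)$, these two exclusions are equivalent. So it suffices to prove the lemma: \emph{if $S\subseteq\mathbb{N}$ is $k$-automatic and has strictly positive upper density, then its lower density is also strictly positive.}

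For this lemma I would work with the DFA directly. Let $\mathcal{A}=(Q,\delta,q_0,F)$ recognize $S$ by reading base-$k$ expansions most-significant-digit first, with leading zeros absorbed at $q_0$. Define $M\in\mathbb{Z}^{Q\times Q}$ by $M_{q,q'}=\#\{d\in\{0,\ldots,k-1\}:\delta(q,d)=q'\}$, and let $v_0,u\in\{0,1\}^Q$ be the indicators of $\{q_0\}$ and $F$; then $\pi_S(k^n)=v_0^T M^n u$ up to a bounded additive correction, so $\pi_S(k^n)/k^n=v_0^T P^n u$ for the row-stochastic matrix $P:=M/k$. The eigenvalues of a finite stochastic matrix on the unit circle are roots of unity and semisimple, so there is a period $d\ge 1$ for which $c_r:=\lim_{m\to\infty}\pi_S(k^{md+r})/k^{md+r}$ exists for every $r\in\{0,\ldots,d-1\}$. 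From the elementary inequality $\pi_S(x)/x\le k\cdot\pi_S(k^{n+1})/k^{n+1}$ valid for $x\in[k^n,k^{n+1})$, positive upper density forces some $c_{r_0}>0$. Then, by monotonicity of $\pi_S$, for any $x\in[k^{md+r_0},k^{(m+1)d+r_0})$ one has $\pi_S(x)/x\ge\pi_S(k^{md+r_0})/k^{(m+1)d+r_0}\to c_{r_0}/k^d$, whence $\alpha\ge c_{r_0}/k^d>0$, proving the lemma and the converse.

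For the realization direction the pairs $(0,0)$ and $(1,1)$ are immediate from $S=\emptyset$ and $S=\mathbb{N}$. For rational $0<\alpha=\beta<1$, say $\alpha=p/q$, a union of $p$ residue classes modulo $q$ is eventually periodic, hence $k$-automatic for all $k$, and has density $\alpha$. The substantive case is $0<\alpha<\beta<1$, where I would generalize the even-length example of the introduction. Pick a period $d$ and rational target values $c_0,\ldots,c_{d-1}\in[\alpha,\beta]$ with $\min_r c_r=\alpha$, $\max_r c_r=\beta$, and consecutive ratios $c_{r+1}/c_r\in[1/k,k]$---achievable once $d$ is large enough---and then construct $S$ as an essentially periodic union of intervals of the form $[k^n,\,k^n+\lfloor a_{n\bmod d}(k-1)k^n\rfloor)$, where the $a_r\in[0,1]$ are chosen so that $\pi_S(k^{md+r})/k^{md+r}\to c_r$. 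Such a set is $k$-automatic, since acceptance reduces to a condition on the length of the base-$k$ expansion modulo $d$ together with a periodic threshold comparison on the leading digits. By arranging the placements so that $\pi_S(x)/x$ is monotone inside each block $[k^n,k^{n+1})$ between its endpoint values, one forces $\liminf_x\pi_S(x)/x=\alpha$ and $\limsup_x\pi_S(x)/x=\beta$.

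The main obstacle lies in the realization: the ratio constraint $c_{r+1}/c_r\in[1/k,k]$ requires the period $d$ to grow with $\beta/\alpha$, and---more delicately---one must position the accepted intervals inside each block carefully so that $\pi_S(x)/x$ does not overshoot $\beta$ nor undershoot $\alpha$ at intermediate values of $x$. This bookkeeping, together with the explicit verification that the resulting set is recognized by a finite automaton, is the technically intricate step; the converse, by contrast, reduces cleanly to the spectral structure of a stochastic matrix.
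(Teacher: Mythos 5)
Your converse argument is correct and coincides with the paper's: the paper's Lemma~\ref{lem:01} proves exactly your reduction (via complementation) to the claim that positive upper density forces positive lower density, using the periodic limits $c_j=\lim_m \pi_S(k^{am+j})/k^{am+j}$ supplied by Proposition~\ref{thm:perron} and the same two sandwiching inequalities you write down. The cases $(0,0)$, $(1,1)$ and $\alpha=\beta$ of the realization are also fine.

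The gap is in the realization for $0<\alpha<\beta<1$, and it is precisely the step you defer as ``bookkeeping'': it is not bookkeeping, it is the whole difficulty, and the construction you propose does not survive it. If the elements of $S$ inside a block $[k^n,k^{n+1})$ form a single contiguous interval $[k^n,k^n+L)$, then $\pi_S(x)/x$ \emph{increases} on that interval, so its maximum over the block is $\bigl(\pi_S(k^n)+L\bigr)/(k^n+L)$, which in general strictly exceeds both endpoint densities. Concretely, even to realize constant density $\beta$ you would take $L=(k-1)\beta k^n$, and the block maximum is $\beta/\bigl(\beta+(1-\beta)/k\bigr)>\beta$; so your set would have upper density strictly larger than the target. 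Packing the interval at the end of the block symmetrically undershoots the liminf. There is no placement of a single interval per block that makes $\pi_S(x)/x$ monotone there, so ``arranging the placements'' requires a genuinely different construction. The paper's fix is to spread the new elements \emph{uniformly} through each block by intersecting the length-parity sets $U_0,U_1$ with unions of residue classes modulo a fixed $C$: then $S$ has constant local density $\rho_n\in\{\alpha',\beta'\}$ on $[k^n,k^{n+1})$ up to $O(1)$, so $\pi_S(x)/x=\rho_n+(\pi_S(k^n)-\rho_n k^n)/x+o(1)$ is a monotone M\"obius function of $x$ on each block and its extremes occur at the endpoints. A second point your sketch glosses over is that the local densities cannot be the targets themselves: the paper solves $k\alpha'+\beta'=(k+1)\alpha$, $k\beta'+\alpha'=(k+1)\beta$ for $\alpha',\beta'$, after replacing $k$ by a power so that these lie in $(0,1)$ (this replacement also plays the role of your condition on consecutive ratios, which as stated --- $c_{r+1}/c_r\in[1/k,k]$ --- is not the correct feasibility constraint; the upper constraint is $c_{r+1}\le c_r/k+(k-1)/k$). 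With period $2$ and residue classes, automaticity is immediate, whereas recognizing your floor-threshold intervals would need a separate argument.
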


The outline of this paper is as follows.  In \S\ref{background}, we give some of the basic background on automatic sequences and sets and related notation we will make use of. In \S\ref{proof}, we give an overview of the strategy used to proof Theorem \ref{thm:main} and then prove Theorem \ref{thm:main}. In \S\ref{computable} we provide an algorithm that allows one to compute the upper and lower density of an automatic set.  In \S\ref{example} we give examples that are used to demonstrate Theorem \ref{thm:main3}, and finally in \S\ref{conc} we give some concluding remarks and raise a question concerning possible extensions of Theorem \ref{thm:main} to morphic sequences.
\section{Background on automata and automatic sets}\label{background}
In this section we give the necessary background on finite-state automata and $k$-automatic sequences and sets. 

Let $\Sigma$ be a nonempty finite set and let $\Sigma^*$ denote the free monoid on $\Sigma$. A \emph{deterministic finite automaton with output} (DFAO) is a $6$-tuple $$\Gamma = (Q, \Sigma, \delta, q_0, \Delta, \tau),$$ where $Q$ is a finite set of states, $\Sigma$ is a finite input alphabet, $\delta$ is the transition function from $\Sigma\times Q$ to $Q$, $q_0 \in Q$ is the initial state, $\Delta$ is an output alphabet, and $\tau$ is the output function from $Q$ to $\Delta$.  Less formally, a DFAO is simply a directed graph in which the vertices are the elements of $Q$; for each vertex $q\in Q$ and each $x\in \Sigma$ we have a directed arrow with label $x$ from $q$ to the state $\delta(x,q)$.  Given a word $w\in \Sigma^*$, the DFAO gives us an output in $\Delta$ as follows: we begin at the initial state $q_0$ and then, reading $w$ from left to right, we obtain a path in this directed graph by moving vertex to vertex as we read the letters of $w$.  After reading $w$ we end up at some state $q\in Q$ and we then apply $\tau$ to obtain an output in $\Delta$.  Thus we can associate a map $f:\Sigma^*\to \Delta$ with a DFAO $\Gamma$.  

We give an example of a DFAO in Figure 1 that generates the map $f$ from $\{0,1\}^*$ to $\{0,1\}$ and $f$ is $1$ precisely when the string $w$ is either of the form $0^a$ or of the form $0^a 1 u$ where $u$ has odd length.  In particular, $f$ induces a well-defined map $h:\mathbb{N}\to \{0,1\}$ given by taking the binary expansion of $n$ and applying $f$; then $h(n)$ is $1$ precisely when the number of digits in the binary expansion of $n$ is even and is $0$ otherwise, where we take the binary expansion of $0$ to be the empty word.

\begin{figure}[!htbp]
\begin{tikzpicture}[shorten >=1pt,node distance=2cm,on grid]
  \node[state,initial]   (q_0)                {$q_0/1$};
  \node[state]           (q_1) [right=of q_0] {$q_1/0$};
   \node[state]           (q_2) [right=of q_1] {$q_2/1$};
  \path[->] (q_0)		(q_0) edge [loop above]   node         {0} ()
   (q_0)		 	edge  [bend left]   	node [above] {1} (q_1)
		(q_1)		edge [bend left] node [above] {0,1} (q_2)
		(q_2)		edge [bend left] node [below] {0,1} (q_1);
  \end{tikzpicture}
  \caption{The DFAO generating the sequence $h(n)$.}
\end{figure}
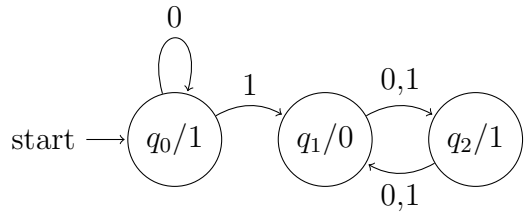

 Let $k\ge 2$ be a natural number and let $\Sigma_k$ be the alphabet $\{0,1,\ldots ,k-1\}$.  For every natural number $n$, there is a word $w=(n)_k\in \Sigma_k^*$, which is the base-$k$ expansion of $n$, where we define $(0)_k$ to be the empty word; conversely, given a non-empty word $w\in \Sigma_k^*$ with no leading zeros there is a natural number $n=[w]_k$, which is the natural number whose base-$k$ expansion is $w$. In the case when $w$ is the empty word, we take $[w]_k=0$. A sequence $a: \mathbb{N} \rightarrow \Delta$ is called \emph{$k$-automatic} if there exists a DFAO $\Gamma=(Q, \Sigma_k, \delta, q_0, \Delta, \tau)$ such that for each $n \in \mathbb{N}$, $a(n)$ can be computed from $\Gamma$ by feeding the word $(n)_k$ into $\Gamma$, reading the digits from left to right. We then say that a subset $S \subseteq \mathbb{N}$ is a \emph{$k$-automatic set} if the characteristic function of $S$, $\chi_S:\mathbb{N}\to \{0,1\}$ defines a $k$-automatic sequence.  It is worth noting that some authors define $k$-automatic sequences using the convention that one reads the base-$k$ expansion of a number $n$ from right to left.  As it turns out, this definition is equivalent to the one we give (cf. \cite[Corollary 4.3.5]{AS}).  
 
 An important property of an automatic sequence $f:\mathbb{N}\to \Delta$, when regarded as a map from $\Sigma_k^*$ to $\Delta$ with the property that $f(0w)=f(w)$ for every word $w\in \Sigma_k^*$, is that there is a finite set of maps $f=f_1,\ldots ,f_d:\Sigma_k^*\to\Delta$ with the property that for each $i\in \Sigma_k^*$ and each $j\in \{1,2,\ldots ,d\}$ there is some $\ell\in \{1,2,\ldots ,d\}$ such that $f_j(iw)=f_{\ell}(w)$ for every word $w\in \Sigma_k^*$, where $iw$ is the concatenation of $i$ and $w$.  We call these maps the (left) $k$-\emph{kernel} of $f$.

We make use of the following fact about automatic sequences, which we suspect is well-known, although we are unaware of a proof in the literature.
\begin{proposition} Let $k\ge 2$ be a natural number, let $h:\mathbb{N}\to \mathbb{Q}_{\ge 0}$ be a $k$-automatic sequence, and let $s(n)=\sum_{j<n} h(j)$.  Then there exist $\beta\in (0,k)$, $C>0$, $a\ge 1$, and rational numbers $c_{j}$ for $j\in  \{0,1,\ldots ,a-1\}$ such that 
$$|s(k^{an+j}) - c_{j} k^{an+j}|<C \beta^{an}$$ for every $n\ge 0$.  Moreover, $a$ and the rational numbers $c_{0},\ldots , c_{a-1}$ are recursively computable and $\beta$ can be effectively determined. 
\label{thm:perron}
\end{proposition}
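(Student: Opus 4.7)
The plan is to express $s(k^n)$ as the first coordinate of the $n$-th iterate of an integer matrix applied to a rational vector, and then to read off the asymptotics from the spectrum of that matrix. Let $\{f_1,\ldots,f_d\}$ be the left $k$-kernel of $h$, with $f_1$ the map $\Sigma_k^*\to\mathbb{Q}_{\ge 0}$ inducing $h$. Define $T_j(n)=\sum_{w\in\Sigma_k^n}f_j(w)$ and $T(n)=(T_1(n),\ldots,T_d(n))^T$; since $f_1(0w)=f_1(w)$, we have $T_1(n)=s(k^n)$. The kernel relations $f_j(iw)=f_{\sigma(i,j)}(w)$ for letters $i\in\Sigma_k$ give $T(n+1)=MT(n)$, where $M_{j,\ell}=|\{i\in\Sigma_k:\sigma(i,j)=\ell\}|$. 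This $M$ is a non-negative integer matrix with every row summing to $k$, so $M\mathbf{1}=k\mathbf{1}$ and $\rho(M)=k$. By the Perron--Frobenius theorem (applied to each irreducible component of spectral radius $k$), the peripheral eigenvalues of $M$ have the form $k\zeta$ for roots of unity $\zeta$; let $a$ be the l.c.m.\ of the orders of these $\zeta$, so that $\lambda^a=k^a$ for every peripheral $\lambda$.

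I would then decompose $T(0)$ in a Jordan basis of $M$ as $T(0)=\sum_\lambda v_\lambda$. Boundedness of $h$ (the output alphabet is finite) forces every $T_j(n)$, and hence every coordinate of $M^nT(0)$, to be $O(k^n)$; this rules out an $n^rk^n$ contribution from a generalised eigenvector at Jordan level $>1$ attached to any peripheral $\lambda$, so each peripheral $v_\lambda$ is a genuine eigenvector. Letting $P$ be the spectral projector onto the sum of the generalised peripheral eigenspaces of $M$, it follows that $M^{an}PT(0)=k^{an}PT(0)$, and for $0\le j<a$,
$$M^{an+j}T(0)=k^{an}\,M^jPT(0)+M^j(M^a)^n(I-P)T(0).$$
Taking the first coordinate and setting $c_j=k^{-j}\,e_1^TM^jPT(0)$ yields
$$s(k^{an+j})=c_jk^{an+j}+e_1^TM^j(M^a)^n(I-P)T(0),$$
with the error bounded by $C\beta^{an}$ for any fixed $\beta\in(\rho_0^{1/a},k)$, where $\rho_0<k^a$ is the largest modulus of a non-peripheral eigenvalue of $M^a$; polynomial factors coming from Jordan blocks of $M$ at sub-peripheral eigenvalues are absorbed into $C$ by enlarging $\beta$ slightly.

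Rationality of each $c_j$ is then automatic: $P$ is the projection onto the kernel of an integer polynomial in $M$ along a complementary kernel and so has rational entries, while $T(0)\in\mathbb{Q}^d$ and $M\in\mathbb{Z}^{d\times d}$. Effectiveness is clear at every step: the kernel and $M$ are read off from the DFAO; the characteristic polynomial $\chi_M\in\mathbb{Z}[x]$ is factored exactly, and its roots of modulus $k$ can be recognised as $k$ times roots of unity whose orders determine $a$ by exact arithmetic on algebraic numbers; $P$ and hence $c_j$ are produced by rational linear algebra on explicit integer polynomials in $M$; and $\beta$ can be chosen to be any rational strictly between the second spectral radius of $M$ and $k$. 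The main obstacle in the proof is the peripheral semisimplicity of $T(0)$ in the Jordan decomposition of $M$, which I would handle by the boundedness argument above rather than by invoking any general spectral result about reducible non-negative matrices.
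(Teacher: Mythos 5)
Your proposal is correct and takes essentially the same route as the paper's proof: the same kernel-transition matrix with row sums $k$, the same spectral facts (peripheral eigenvalues of the form $k\omega$ with $\omega$ a root of unity), the same use of the boundedness $s(k^n)=O(k^n)$ to rule out Jordan contributions at the peripheral eigenvalues, and the same spectral-projection formula for the $c_j$. The only differences are cosmetic: you derive the peripheral spectrum from Perron--Frobenius on irreducible components where the paper cites Karpelevi\v{c}, you obtain rationality of $c_j$ directly from the rationality of the spectral projector (an explicit rational polynomial in $M$) where the paper first runs a Galois-averaging argument, and you determine $a$ by identifying cyclotomic factors where the paper uses an a priori bound via Euler's $\phi$-function --- but the paper's computability discussion ends up constructing exactly your projector by polynomial division in any case.
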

\begin{proof}
We have an automatic map $f$ from $\Sigma_k^*$ to $\mathbb{Q}$ with $f(0w)=f(w)$ and $f((n)_k)=h(n)$, and we let $f=f_1,\ldots ,f_s$ denote the left $k$-kernel of $f$.  Then there are $s\times s$ matrices $A_0,\ldots ,A_{k-1}$ with entries in $\{0,1\}$ such that each $A_j$ has exactly one $1$ in each row and such that for each $w\in \Sigma_k^*$ and each $i\in \{0,\ldots ,k-1\}$ we have  
$$[f_1(iw),\ldots , f_s(iw)]^T = A_i [f_1(w),\ldots ,f_s(w)]^T.$$
Let $B=A_0+\cdots +A_{k-1}$ and let $v_0=[f_1(\epsilon),\ldots ,f_s(\epsilon)]^T$.
Then \begin{align*}
e_1^T B^n v_0 &= \sum_{0\le i_0,\ldots ,i_{n-1} <k} e_1^T A_{i_{n-1}}\cdots A_{i_0} v_0\\
&= \sum_{\{w \colon |w|=n\}} f(w) \\
&= s(k^n).
\end{align*}
Now $B$ is a sum of $k$ matrices, each of which have exactly one $1$ in each row, and the entries of $B$ are nonnegative.  In particular, the row sums of $k^{-1}\cdot B$ are all $1$ and the entries are nonnegative. By a result of Karpelevi\u{c} \cite{Karp} (see also Higham and Lin \cite{HL}) we have:
\begin{enumerate}
\item[(i)] $k$ is an eigenvalue of $B$;
\item[(ii)] each eigenvalue of $B$ of modulus $k$ is of the form $k\omega$ with $\omega$ a root of unity;
\item[(iii)] each eigenvalue of $B$ has modulus at most $k$.
\end{enumerate}
Then $$s(k^n)=e_1^T B^n v_0 = \sum_{i=1}^m p_i(n) \gamma_i^n,$$ where $\gamma_1,\ldots ,\gamma_m\in \bar{\mathbb{Q}}$ are the eigenvalues of $B$ and $p_1,\ldots ,p_m\in \bar{\mathbb{Q}}[x]$ are polynomials of degree at most one less than the largest Jordan block with eigenvalue $\gamma_i$ occurring in the Jordan form of $B$.  We may assume that $\gamma_1,\ldots ,\gamma_t$ are the eigenvalues of modulus $k$ (and hence $\gamma_i=k\omega_i$ with $\omega_i$ a root of unity for $i=1,\ldots ,t$) and we may pick $\beta\in (0,k)$ such that $\gamma_{t+1},\ldots ,\gamma_m$ have modulus strictly less than $\beta$. Now we pick $a>0$ so that $\omega_1^a=\cdots = \omega_t^a=1$.
For $j=0,\ldots , a-1$ we have
$$s(k^{an+j}) =  q_j(n) k^{an+j} + \sum_{i=t+1}^m p_{i,j}(n) \gamma_i^{an+j}$$ for some polynomials 
$q_j, p_{i,j}$ in $\bar{\mathbb{Q}}[x]$. A priori we only know that $q_j(x)$ is a polynomial with coefficients in $\bar{\mathbb{Q}}$, but we shall show that it is in fact a constant polynomial, where the constant is rational. 
 
Let $K$ be the Galois closure of the number field generated by the coefficients of $p_{1},\ldots ,p_m$ and $\gamma_1,\ldots ,\gamma_m$, and let $G$ be the Galois group of $K$. If $\sigma$ is in $G$, then $\sigma$ permutes $\gamma_1,\ldots ,\gamma_m$, but since $\sigma$ fixes $k$ and takes roots of unity to other roots of unity, we see that $\sigma$ permutes $\gamma_1,\ldots ,\gamma_t$; consequently, $\sigma$ permutes $\gamma_{t+1},\ldots ,\gamma_m$.  From these facts, we first get 
$$\sum_{\sigma\in G} \sigma(s(k^{an+j})) = |G| s(k^{an+j}),$$ as $s(k^{an+j})$ is rational.  On the other hand, this is equal to 
$$\left( \sum_{\sigma\in G} \sigma(q_j(n)) \right) k^{an+j} + \sum_{\sigma\in G} \sum_{i=t+1}^m \sigma(p_i(n)) \sigma(\gamma_i)^n.$$ 
Let $h_j(n):=\sum_{\sigma\in G} \sigma(q_j(n))$.  Then by construction $h_j(n)$ is a polynomial in $n$ and by construction it is fixed by every element of $G$ and hence $h_j(n)$ is a rational number for every $n$; moreover, since $\sigma$ permutes $\gamma_{t+1},\ldots ,\gamma_m$, they each have modulus strictly less than $\beta$.  

Thus $$s(k^{an+j}) = |G|^{-1} h_j(n) k^{an+j} + {\rm O}(\beta^{an+j}).$$  Finally observe that $|s(k^{an+j})|\le C_0 k^{an+j}$ where $C_0$ is the maximum of the absolute values of the range of $f$.  It follows that $h_j(n)$ is a rational constant for $j=0,\ldots ,a-1$, and we let 
$c_j\in \mathbb{Q}$ denote the value $|G|^{-1} h_j(0)$.  We have thus proved the rationality of the constants $c_j$.

We now make remarks concerning the computability of $a$ and $c_0,\ldots ,c_{a-1}$.  Observe that the $d\times d$ matrix $B$ can be computed from the DFAO giving $f$, and if $\omega$ is a $b$-th root of unity such that $k\omega$ is a root of the characteristic polynomial of $B$, then the field extension $[\mathbb{Q}(\mathbb{\omega}):\mathbb{Q}]$ has degree at most $d$.  It follows that $\phi(b)\le d$, where $\phi(n)$ is Euler's $\phi$-function.  If $\phi(b)\le d$ then each prime power factor $p^r$ of $b$ must satisfy $p^{r-1}(p-1)\le d$, and so 
$b$ divides $\prod_{p\le d} p^{\lfloor \log_p(d)\rfloor}$.  Then we can take $a$ to be the quantity $\prod_{p\le d} p^{\lfloor \log_p(d)\rfloor}$.  

We next explain how to compute the rational numbers $c_0,\ldots ,c_{a-1}$.
By our choice of $a$, the matrix $B^a$ has an eigenvalue $k^a$ and the remaining eigenvalues are strictly less than $k^a$ in modulus.  Moreover, we have shown that the Jordan blocks in $B^a$ associated to the eigenvalue $k^a$ are all of size one and so the minimal polynomial of $B^a$ has $k^a$ as a simple root.  We can compute the characteristic polynomial of $B^a$ and we can find the factorization of this polynomial in $\mathbb{Q}[x]$ into irreducible factors (see \cite{LLL} for details) and hence we can find all of the finitely many monic rational polynomial factors of the characteristic polynomial of $B^a$.  Then by applying each of these factors to $B^a$, we can compute the minimal polynomial of $B^a$.

As mentioned before, the minimal polynomial has $k^a$ as a simple root and so if $q(x)$ denotes the minimal polynomial of $B^a$ then we have $q(x)=(x-k^a)q_0(x)$, where all roots of $q_0(x)$ are strictly less than $k^a$ in modulus.  Since $k^a$ is not a root of $q_0(x)$, we can then use the division algorithm for polynomials to compute a rational polynomial $t(x)$ and a nonzero rational constant $\lambda$ such that $q_0(x) = t(x)(x-k^a) + \lambda$.  
For $j\in \{0,1,\ldots ,a-1\}$ we have
$$s(k^{an+j})=e_1^T B^{an+j} v_0.$$  To compute $c_j$, we let $v_{j}':=\lambda^{-1} q_0(B) B^j v_0$ and we let 
$v_{j}'':=-\lambda^{-1} t(B) (B-k^a I)B^j v_0$, both of which are computable.  
By construction, $(B-k^a I)v_{j}' =0$ since $(x-k^a)q_0(x)$ is the minimal polynomial of $B$.  Similarly, $q_0(B)v_{j}''=0$; moreover, $v_{j}'+v_{j}''=B^j v_0$ by construction.  Then $v_{j}''$ lies in the sum of the eigenspaces corresponding to the roots of $q_0(x)$ and since these are all strictly less than $k^a$ in modulus, we see that
$$e_1^T B^{an} v_{j}'' = {\rm o}(k^{an})$$ as $n\to\infty$.  On the other hand, $B^a v_{j}' = k^a v_{j}$ and so 
$$e_1^T B^{an} v_{j}'= k^{an} e_1^T v_{j}'.$$  Then we can compute the rational number $c_j:=e_1^T v_{j}'$ and by construction
$$e_1^T B^{an+j}v_0 = e_1^T B^{an} (v_j'+v_j'') = c_j k^{an} + {\rm o}(k^{an})$$ and so we have shown that the $c_{j}$ are recursively computable. Finally, since we can numerically calculate the eigenvalues of $B$ to arbitrary precision, we can determine some $\beta\in (0,k)$ that is strictly larger than the eigenvalues that are strictly less than $k$ in modulus. 
\end{proof}

\section{Proof of rationality}\label{proof}
In this section we prove Theorem \ref{thm:main}.  To do so, we need a technical lemma, which we now prove.
 \begin{lemma}
\label{limsup} Let $k\ge 2$ be a natural number, let $\gamma$ be a positive real number, let $s_n, s_n'$ be sequences of positive numbers, let $u', v', u,v$ be positive real numbers, and let $b$ and $c$ be positive integers.  If
$$\frac{(v' k^{b+c} + u' k^c + s_n')}{(v k^{b+c} + u k^c + s_n)}\to \gamma$$ as $n\to\infty$ and 
$$\limsup_{n\to\infty} \frac{(v' k^c + s_n')}{(v k^c + s_n)} \le \gamma,$$ then
$$\limsup_{n\to\infty} \frac{(v' k^{2b+c} + u'(k^{b+c}+k^c)+s_n')}{(vk^{b+2c} + u (k^{b+c}+k^c) + s_n)} \ge \gamma.$$ 
\end{lemma}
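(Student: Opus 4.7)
My plan is to reduce to a tractable limit-point analysis via a change of variables, then perform a two-case split depending on the asymptotic size of $s_n$. First I would set $\alpha_n := v'k^c + s_n'$ and $\beta_n := vk^c + s_n$. A direct expansion puts the first hypothesis in the form $(A+\alpha_n)/(B+\beta_n) \to \gamma$, where
\[
A := k^c\bigl[v'(k^b-1)+u'\bigr], \qquad B := k^c\bigl[v(k^b-1)+u\bigr],
\]
and the conclusion's ratio becomes $(A''+\alpha_n)/(B''+\beta_n)$ for explicit constants $A'', B''$ read off from the given expression. The second hypothesis simplifies to $\limsup_n \alpha_n/\beta_n \le \gamma$.

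The crucial algebraic input, established by direct expansion, is the pair of identities $A'' - A = k^b A$ and $B'' - B = k^b B$, equivalently $A''=(k^b+1)A$ and $B''=(k^b+1)B$. A small calculation shows this factorisation requires the leading exponents of $k$ in the numerator and denominator of the conclusion to match (as written, the denominator's $vk^{b+2c}$ appears where the natural reading would give $vk^{2b+c}$; under that reading the identities fall out cleanly, and I would proceed on that basis).

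Next I would pass to a subsequence $\{n_j\}$ along which $\beta_{n_j}$ tends to a limit $\beta^* \in [vk^c, +\infty]$. If $\beta^* = +\infty$, the constants $A, B$ are negligible and the first hypothesis forces $\alpha_{n_j}/\beta_{n_j} \to \gamma$; the same reasoning then gives $(A''+\alpha_{n_j})/(B''+\beta_{n_j}) \to \gamma$. If $\beta^* < +\infty$, the first hypothesis forces $\alpha_{n_j}$ to converge to $\alpha^* := \gamma(B+\beta^*) - A$, and the second hypothesis applied along this subsequence gives $\alpha^*/\beta^* \le \gamma$, which rearranges to $A \ge \gamma B$. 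Using $A + \alpha^* = \gamma(B+\beta^*)$ together with the identities above,
\[
\frac{A''+\alpha^*}{B''+\beta^*} - \gamma \;=\; \frac{(A''-A) - \gamma(B''-B)}{B''+\beta^*} \;=\; \frac{k^b(A-\gamma B)}{B''+\beta^*} \;\ge\; 0.
\]
Either way, the conclusion's sequence has a limit point $\ge \gamma$, which gives the desired lower bound on the $\limsup$.

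The main obstacle I anticipate is the algebraic bookkeeping in verifying the identities $A''-A = k^b A$ and $B''-B = k^b B$ from the precise powers of $k$ in the numerator and denominator; once those are in hand the two-case limit-point argument is essentially routine.
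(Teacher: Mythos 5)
Your proposal is correct, and it correctly identifies (and repairs, in the same way the paper's own proof implicitly does) the typo $vk^{b+2c}$ versus $vk^{2b+c}$ in the denominator of the conclusion. The algebraic core is the same in both arguments: your identities $A''=(k^b+1)A$ and $B''=(k^b+1)B$ are just a repackaging of the paper's relations $X_n''=(k^b+1)X_n'-k^bX_n$ and $Y_n''=(k^b+1)Y_n'-k^bY_n$, since $A=X_n'-X_n$ and $A''=X_n''-X_n$ are constants in $n$. Where you genuinely diverge is in the analytic step. The paper argues by contradiction: it assumes $X_n''/Y_n''<\gamma-\epsilon$ eventually, feeds in the two hypotheses with explicit $\epsilon/(3k^b)$ slack, and cancels down to the absurdity $Y_n'\le Y_n$. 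You instead extract a subsequence along which $\beta_n$ converges in $[vk^c,+\infty]$ and compute the limit of the conclusion's ratio directly in each of the two cases, using $\alpha^*/\beta^*\le\gamma\Leftrightarrow A\ge\gamma B$ and the one-line computation $\bigl((A''-A)-\gamma(B''-B)\bigr)=k^b(A-\gamma B)\ge 0$. Your route avoids the epsilon bookkeeping entirely, isolates exactly where positivity of $b$ enters (only through $k^b>0$ in the final inequality, whereas the paper needs $k^b\ge 1$ for its estimate $(k^b+1)(1-1/(3k^b))\ge k^b(1+1/(3k^b))$), and as a bonus identifies the actual limit point $\gamma+k^b(A-\gamma B)/(B''+\beta^*)$ rather than merely bounding the limsup. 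All the small steps check out: $\beta^*\ge vk^c>0$ so the Case 2 division is legitimate, and in Case 1 the constants are indeed negligible. This is a clean, complete alternative proof.
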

\begin{proof}
Let $$X_n=v' k^c + s_n',~~X'_n=v' k^{b+c} + u' k^c + s_n', ~{\rm and}~~X_n''=  v' k^{2b+c} + u'(k^{b+c}+k^c)+s_n'.$$  Similarly, we let 
$$Y_n=v k^c + s_n,~~Y_n'=v k^{b+c} + u k^c + s_n,~{\rm and}~~Y_n''=  v k^{2b+c} + u(k^{b+c}+k^c)+s_n.$$
Then $X_n''=(k^b+1)X_n'-k^b X_n$ and $Y_n'' = (k^b+1) Y_n'-k^b Y_n$ and our goal is to show that
$\limsup_{n\to\infty} X_n''/Y_n''\ge \gamma$.

Now suppose that the conclusion to the statement of the lemma does not hold.  Then there exists some $\epsilon>0$ with $\gamma-\epsilon>0$ such that
$$X_n''/Y_n'' < \gamma -\epsilon$$ for all $n$ sufficiently large. 
In other words, for $n$ large we have 
$$(k^b+1)X_n'-k^b X_n < (\gamma-\epsilon)((k^b+1) Y_n'-k^b Y_n).$$
We may rewrite this inequality as
\begin{equation}
\label{eq:XY''}
(k^b+1)X_n' +  (\gamma-\epsilon) k^b Y_n < (\gamma-\epsilon)(k^b+1)Y_n' +k^b X_n,
\end{equation} for $n$ sufficiently large.
By assumption, $X_n'/Y_n'\to \gamma$ for $n$ large. Thus 
\begin{equation}
X_n' \ge (\gamma- \epsilon/(3k^b))Y_n'
\label{eq:XY}
\end{equation}
 for all $n$ sufficiently large.  Similarly, since $\limsup X_n/Y_n \le \gamma$, we have
 \begin{equation}
 \label{eq:XY'}
X_n \le (\gamma+ \epsilon/(3k^b))Y_n
\end{equation}
for $n$ sufficiently large.
Then using Equations (\ref{eq:XY}) and (\ref{eq:XY'}) along with Equation (\ref{eq:XY''}), we have
 \begin{align*}
 (k^b+1)(\gamma- \epsilon/(3k^b))Y_n' + k^b (\gamma-\epsilon) Y_n 
&\le (k^b+1)X_n' + k^b (\gamma-\epsilon) Y_n \\ &< (\gamma-\epsilon)(k^b+1)Y_n' +k^b X_n \\
 &< (\gamma-\epsilon)(k^b+1)Y_n' + k^b (\gamma+ \epsilon/(3k^b))Y_n.
 \end{align*}
 for $n$ sufficiently large.
Then by rearranging the inequality
 $$ (k^b+1)(\gamma-\epsilon/(3k^b))Y_n' + k^b (\gamma-\epsilon) Y_n <  (\gamma-\epsilon)(k^b+1)Y_n' + k^b (\gamma+\epsilon/(3k^b))Y_n,$$ we see
 $$(k^b+1)\epsilon(1-1/(3k^b)) Y_n' <  k^b Y_n \epsilon (1+1/(3k^b))$$
 for all $n$ sufficiently large.  Since $\epsilon>0$, we then must have
 \begin{equation}
 \label{eq:YYYYYY}
 (k^b+1)(1-1/(3k^b))Y_n' < k^b  (1+1/(3k^b))Y_n\end{equation} for all sufficiently large $n$.
 
 To complete the proof, observe that since $b$ is a positive integer,
$$
 (k^b+1)(1-1/(3k^b)) = k^b + 2/3 - 1/(3k^b)
 \ge k^b + 1/3 = k^b(1+1/(3k^b)),$$
and so using Equation (\ref{eq:YYYYYY}), we see
 $$k^b(1+1/(3k^b))Y_n'\le  (k^b+1)Y_n' (1-1/(3k^b)) < k^b (1+1/(3k^b))Y_n,$$ or equivalently that $Y_n'\le Y_n$ for all sufficiently large $n$. But it is immediate that $Y_n' \ge Y_n + uk^c > Y_n$ for all $n$, and so we get a contradiction.  
 The result follows.
 \end{proof}

This result is key to obtaining the proof of Theorem \ref{thm:main}.  We will find it useful to make use of certain assumptions and to fix notation for the remainder of the proof.
\begin{notation} Throughout the remainder of the paper we make the following assumptions and notation.\label{notn:1}
\begin{enumerate}
\item We let $\Sigma_k=\{0,1,\ldots ,k-1\}$ and for $w\in \Sigma_k^*$, we let $|w|$ denote the length of $w$ and we let $\prec$ denote the lexicographic order on $\Sigma_k^*$ where $0<1<\cdots <k-1$.
\item We let $h:\mathbb{N}\to \mathbb{Q}$ be $k$-automatic and let $f:\Sigma^*\to \mathbb{Q}$ be a map associated to a DFAO with the property that $f(0w)=f(w)$ and $f(w)=h([w]_k)$.
\item We let $f=f_1,\ldots , f_d:\Sigma_k^*\to \mathbb{Q}$ denote the maps in the left $k$-kernel of $f$.
\item For each $i\in \{1,\ldots ,d\}$ and each $u\in \Sigma_k^*$, we let $\delta(i,u)\in\{1,\ldots ,d\}$ denote the number $j$ with the property that
$f_i(uw) = f_j(w)$ for every $w\in \Sigma_k^*$. 
\item For $w\in \Sigma_k^*$ we let $s_i(w) = \sum_{\{v\prec w, |v|=|w|\}} f_i(v)$ for $i=1,\ldots ,d$ and we let $s(n)=s_1((n)_k)$.
\item We let $\gamma$ denote $\limsup_{n\to\infty} s(n)/n$. 
\item Appealing to Proposition \ref{thm:perron}, we have $a\ge 1$, rational numbers $c_{i,j}$, $1\le i\le d$, $0\le j<a$, and $\beta\in (0,k)$ be such that
$$s_i((k^{an+j})_k)= c_{i,j} k^{an+j} + {\rm O}(\beta^{an}).$$ 
\item For each word $u\in \Sigma_k^*$ with no leading zeros, whose length is a multiple of $a$, and each $j\in \{0,\ldots ,a-1\}$, we let 
$$\gamma_j(u):= \limsup_{\{w\colon a\mid (|w|-j)\}} s([uw]_k)/[uw]_k.$$  
\end{enumerate}
\end{notation}
Using the assumption and notation above, our goal in Lemma \ref{lem:01} is to show that one can decide when $\gamma$ is zero, so for the remainder of this section we assume that $\gamma$ is strictly positive.  We pick words $w_1, w_2, \ldots \in \Sigma_k^*$ with the property 
$$s([w_m]_k)/[w_m]_k \to \gamma.$$ There is some $j\in \{0,1,\ldots ,a-1\}$ such that $|w_m| \equiv j~(\bmod ~a)$ for infinitely many $m$, and by refining our collection if necessary, we may assume that 
$|w_m|\equiv j~(\bmod ~a)$ for every $m$.  In other words $\gamma=\gamma_j(\epsilon)$, where $\epsilon$ is the empty word.  Observe that for each $\ell\ge 0$ there is at least one word of length $\ell a$ that is a prefix of infinitely many $w_m$, and for such words we have $\gamma_j(u)=\gamma$.

\begin{notation} In addition to the items in Notation \ref{notn:1}, we make the following assumptions and fix additional notation.\label{notn:2}
\begin{enumerate}
\item[(9)] We assume $j\in \{0,1,\ldots ,a-1\}$ and $u_1,u_2,\ldots u_{2d+1}$ are words of length $a$ in $\Sigma^*$ such that $u_1$ has no leading zeros and such that $\gamma_j(u_1\cdots u_{2d+1})=\gamma$.
\item[(10)] We let $\mathcal{S}$ denote the set of words $u$ with no leading zeros and whose length is a multiple of $a$ such that $\gamma_j(u)=\gamma$.
\item[(11)] We let $p_1,p_2,p_3\in \{1,\ldots ,2d+1\}$ with $ p_1<p_2<p_3$ be such that 
$\ell:=\delta(1, u_1\cdots u_{p_1})=\delta(1, u_1\cdots u_{p_2})=\delta(1, u_1\cdots u_{p_3})$.  (Such $p_1,p_2,p_3$ exist by the pigeonhole principle.)
\item[(12)] We let $U=u_1\cdots u_{p_1}$, $V=u_{p_1+1}\cdots u_{p_2}$, and $W= u_{p_2+1}\cdots u_{p_3}$.
\end{enumerate}
\end{notation}
Now using the notation above, we explain how the remainder of the proof goes. Using the fact that $\ell=\delta(1,U)=\delta(1,UV)=\delta(1,UVW)$, we see $\ell=\delta(\ell,V)=\delta(\ell,W)$.  By construction $\gamma=\gamma_j(UVW)$ and $\gamma\ge \gamma_j(UW)$.  We use Lemma \ref{limsup} to show that $\gamma_j(UV^2W)\ge \gamma_j(UVW)\ge \gamma_j(UW)$.  But since 
$\gamma_j(UV^2W)\le \gamma$, we get that $\gamma_j(UV^2W)=\gamma$.  We then use an induction argument to show that 
$\gamma_j(UV^m W)=\gamma$ for every $m\ge 1$.  Finally, we show that as $m\to\infty$, $\gamma_j(UV^mW)$ tends to a recursively computable rational number, which then gives the result.  Since there are only finitely many possibilities for words $U$ and $V$ with the length of $UV$ bounded by by $(2d+1)a$, we can compute the values $\gamma_i(AB)$ over all possible triples $(A,B,i)$, with $A,B$ non-trivial words in $\Sigma^*$ such that $A$ has no leading zeros, $|A|+|B|\le (2d+1)a$ and $i\le d$, and the maximum will be the limsup of $s(n)/n$ as $n\to\infty$.

A key step in this strategy is the following lemma, which is inspired by work of Schaeffer and Shallit \cite{SS}.

\begin{lemma} Adopt the assumptions and notation of Notation \ref{notn:1} and Notation \ref{notn:2}. Suppose that $A, B, C\in \Sigma_k^*$ are non-trivial words whose lengths are multiples of $a$ such that $A$ has no leading zeros, $ABC\in \mathcal{S}$, and $e:=\delta(1,A)=\delta(1,AB)=\delta(1,ABC)$. Then $AB^2C$ in $\mathcal{S}$ and $$\gamma(AB^2C)=\gamma(ABC)=\gamma.$$
\label{lem:key}
 \end{lemma}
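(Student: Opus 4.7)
The goal is to show $\gamma_j(AB^2C) = \gamma$ (equivalently, $AB^2C \in \mathcal{S}$). The upper bound $\gamma_j(AB^2C) \le \gamma$ is automatic since $\gamma$ is the global limsup of $s(n)/n$, so the task reduces to proving the lower bound $\gamma_j(AB^2C) \ge \gamma$, which will be obtained by applying Lemma \ref{limsup} to a suitable sequence of words.

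Since $\gamma_j(ABC) = \gamma$, select words $w_n$ with $|w_n| \equiv j \pmod a$, $|w_n|\to\infty$, and $s([ABCw_n]_k)/[ABCw_n]_k \to \gamma$. The plan is to expand the three pairs (numerator, denominator) attached to $ACw_n$, $ABCw_n$, $AB^2Cw_n$ into the form required by Lemma \ref{limsup}. For the denominators, base-$k$ arithmetic gives immediately
\begin{align*}
[ACw_n]_k &= [A]_k k^c + s_n, \\
[ABCw_n]_k &= [A]_k k^{b+c} + [B]_k k^c + s_n, \\
[AB^2Cw_n]_k &= [A]_k k^{2b+c} + [B]_k (k^{b+c} + k^c) + s_n,
\end{align*}
where $b := |B|$, $c := |C| + |w_n|$, and $s_n := [C]_k k^{|w_n|} + [w_n]_k$. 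For the numerators $s([\cdot]_k) = s_1(\cdot)$, the hypothesis $e = \delta(1,A) = \delta(1,AB) = \delta(1,ABC)$ forces $\delta(e,B) = \delta(e,C) = e$, so the automaton returns to state $e$ after reading $B$ or $C$ from state $e$. Decomposing $s_1(\cdot)$ successively by the prefixes of length $|A|$, $|B|$, $|C|$ then produces sums of the form $T_i(L) := \sum_{|w|=L} f_i(w)$ at lengths in $\{|w_n|,\, |C|+|w_n|,\, |B|+|C|+|w_n|,\, 2|B|+|C|+|w_n|\}$. The Perron--Frobenius analysis carried out in the proof of Proposition \ref{thm:perron} yields $T_i(L) = \tilde c_{i, L \bmod a} k^L + O(\beta^L)$ for some $\beta \in (0,k)$; since $|A|,|B|,|C|$ are multiples of $a$ and $|w_n| \equiv j \pmod a$, every relevant length is $\equiv j \pmod a$, and the numerators take the matching form $c_A k^c + s'_n$, $c_A k^{b+c} + c_B k^c + s'_n$, $c_A k^{2b+c} + c_B (k^{b+c} + k^c) + s'_n$, where $c_A, c_B$ are constants built from the $\tilde c_{\cdot,j}$ and $s'_n := c_C k^{|w_n|} + s_e(w_n) + O(\beta^{|w_n|})$.

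Setting $v = [A]_k$, $u = [B]_k$, $v' = c_A$, $u' = c_B$, the three pairs match the three expressions in Lemma \ref{limsup} exactly. The lemma's first hypothesis holds by construction of $(w_n)$, and the second hypothesis $\limsup_n s([ACw_n]_k)/[ACw_n]_k \le \gamma$ holds because any such limsup is bounded above by the global limsup $\gamma$. The conclusion of the lemma then gives $\limsup_n s([AB^2Cw_n]_k)/[AB^2Cw_n]_k \ge \gamma$, hence $\gamma_j(AB^2C) \ge \gamma$, finishing the argument.

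The main obstacle will be checking the positivity conditions of Lemma \ref{limsup}: $v = [A]_k > 0$ from the no-leading-zeros assumption on $A$, but $u = [B]_k$ could vanish if $B$ is all zeros, and the coefficient sums $c_A, c_B$ could vanish as well. Handling these degenerate cases likely requires either a separate direct estimate or a mild strengthening of Lemma \ref{limsup} allowing some of these quantities to be zero. A secondary technical point is that the $O(\beta^{|w_n|})$ error absorbed into $s'_n$ contributes only $o(1)$ to each ratio (since $\beta < k$) and so does not affect the final limsup, but one must still verify that $s'_n$ remains eventually nonnegative for the lemma's hypotheses to apply cleanly.
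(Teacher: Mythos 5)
Your proposal follows essentially the same route as the paper's proof: the same choice of witnesses $w_n$, the same prefix-by-prefix decomposition of $s([ACw_n]_k)$, $s([ABCw_n]_k)$, $s([AB^2Cw_n]_k)$ via the kernel maps and Proposition \ref{thm:perron}, and the same application of Lemma \ref{limsup} with $v=[A]_k$, $u=[B]_k$, $v'=\kappa$, $u'=\kappa'$ (the paper merely normalizes by $k^{-|w_n|}$ so that $c=|C|$ is a fixed integer as the lemma requires, rather than letting $c$ depend on $n$). The positivity worry you flag is one the paper itself silently elides---it too takes $u=[B]_k$, which vanishes when $B$ is all zeros---but it is harmless, since the proof of Lemma \ref{limsup} only needs $v>0$ (together with $b\ge 1$) to conclude $Y_n'>Y_n$ and reach its contradiction, and $v=[A]_k>0$ because $A$ is non-trivial with no leading zeros.
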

\begin{proof}
We have $\gamma_j(ABC)=\gamma$ and $\gamma_j(AC)\le \gamma$.  We claim that $\gamma(AB^2C)\ge \gamma(ABC)$, which will then give the result since $\gamma(AB^2C)\le \gamma$. 
By assumption, there is a sequence of words $w_n$ with $|w_n|\equiv j\, (\bmod\, a)$ and $|w_n|\to\infty$ such that $$s([ABCw_n]_k)/[ABCw_n]_k\to \gamma.$$
 Then
\begin{equation}
\label{eq:s}
s([ABCw_n]_k) = \sum_{\{v\colon |v|=|A|,~v\prec A\}}\sum_{\{|w|=|BCw_n|\}} f(vw) + \sum_{\{w\colon |w|=|BCw_n|, w\preceq BCw_n \}} f(Aw),
\end{equation} where $\preceq$ is (pure) lexicographic ordering on $\{0,1,\ldots ,k\}^*$ and we use $\prec$ for strict inequality.  
By Proposition \ref{thm:perron} there is a rational number $\kappa$, $\beta\in (0,k)$, and a positive constant $C_0$ such that 
\begin{equation}
\label{eq:1}
\left| \sum_{\{v\colon |v|=|A|,~v\prec A\}}\sum_{|w|=|BCw_n|} f(vw) - \kappa k^{|BCw_n|} \right| <C_0 \beta^{|w_n|}
\end{equation}
for every $n$, where 
\begin{equation}\label{eq:kappa}
\kappa =  \sum_{\{v\colon |v|=|A|,~v\prec A\}} c_{\delta(1,v),j}\in \mathbb{Q}.
\end{equation}  
Now from the definition of $e$ we have
$$f(Aw)=f(ABw)=f(ABCw) = f_e(w)$$ for every $w$.  Furthermore $f_e(Bw)=f_e(w)$ for every $w$.  
We have
\begin{align*}
&~ \sum_{\{w\colon  |w|=|BCw_n|, w\preceq BCw_n \}} f(Aw) \\ 
&= \sum_{\{w\colon  |w|=|BCw_n|, w\preceq BCw_n \}} f_e(w) \\ 
&= \sum_{\{v\colon |v|=|B|,~v\prec B\}}\sum_{|w|=|Cw_n|} f_e(vw) + \sum_{\{w\colon  |w|=|Cw_n|, w\preceq Cw_n \}} f_e(Bw) \\
&= \sum_{\{v\colon |v|=|B|,~v\prec B\}}\sum_{|w|=|Cw_n|} f_e(vw) + \sum_{\{w\colon  |w|=|Cw_n|, w\preceq Cw_n \}} f_e(w). 
\end{align*}
Using Proposition \ref{thm:perron} again, there is some nonnegative rational $\kappa'$ and some positive constant $C_0'$ and some $\beta'\in (0,k)$ such that
$$\left| \sum_{\{v\colon |v|=|B|,~v\prec B\}}\sum_{|w|=|Cw_n|} f_e(vw) - \kappa' k^{|Cw_n|} \right| <C_0' (\beta')^{|w_n|},$$  
where
\begin{equation}\label{eq:kappaprime}
\kappa' = \sum_{\{v\colon |v|=|B|,~v\prec B\}} c_{\delta(e,v),j}\in \mathbb{Q}.
\end{equation}
By increasing either $\beta$ or $\beta'$, we may assume that $\beta=\beta'$.
Using this fact along with Equations (\ref{eq:s}) and (\ref{eq:1}), we see there is some $C_1>0$ such that 
\begin{equation}
\label{eq:ABC} \left| s([ABCw_n]_k) - (\kappa k^{|BC|} + \kappa' k^{|C|}) k^{|w_n|}  - s_e(Cw_n)\right| < C_1 \beta^{|w_n|}.
\end{equation}
We similarly get positive constants $C_2$ and $C_3$ such that 
\begin{equation}
\label{eq:AC} \left| s([ACw_n]_k) - \kappa k^{|C|} k^{|w_n|}  - s_e(Cw_n)\right| < C_2 \beta^{|w_n|}
\end{equation}
and
\begin{equation}
\label{eq:ABBC} \left| s([ABBCw_n]_k) - (\kappa k^{|BBC|} + \kappa' k^{|BC|} + \kappa' k^{|C|}) k^{|w_n|}  - s_e(Cw_n)\right| < C_3 \beta^{|w_n|}.
\end{equation}
We next let $r_n = [Cw_n]_k$ and we let $u=[B]_k$ and $v=[A]_k$.  
By assumption
$$s([ABCw_n]_k)/[ABCw_n]_k\to \gamma \qquad {\rm as}\qquad n\to \infty,$$ and hence Equation (\ref{eq:ABC}) gives
$$\left(\kappa k^{|BC|} k^{|w_n|} + \kappa' k^{|C|} k^{|w_n|} + s_e([Cw_n]_k)\right)/\left( vk^{|BCw_n|}+ uk^{|Cw_n|}+r_n\right)\to \gamma$$ as $n\to\infty$.
Now since $\limsup s([ACw_n]_k)/[ACw_n]_k \le \gamma$, we have
$$\limsup \left(\kappa k^{|C|} k^{|w_n|} + s_e([Cw_n]_k) \right)/(vk^{|Cw_n|}+r_n)\le \gamma.$$ 
Now we apply Lemma \ref{limsup} with
$u' =\kappa'$, $v'=\kappa$, $s_n'=s_e([Cw_n]_k)k^{-|w_n|}$, $b=|B|$, $c=|C|$, $s_n=r_n k^{-|w_n|}$ and $u,v$ as above to deduce that
$$\limsup \frac{\left(\kappa k^{|BBC|} k^{|w_n|} + \kappa' k^{|BC|} k^{|w_n|} + \kappa' k^{|C|} k^{|w_n|} +  s_e([Cw_n]_k) \right)}{\left(uk^{|BBCw_n|}+ vk^{|BCw_n|}+vk^{|Cw_n|}+r_n\right)}$$
is greater than or equal to $\gamma$, and so $\gamma_j(ABBC)= \gamma$, which gives the result.  
\end{proof}
As an immediate consequence we get the following.
\begin{proposition}
\label{prop:X}
Adopt the assumptions and notation of Notation \ref{notn:1} and Notation \ref{notn:2}. Suppose that $A, B, C\in \Sigma_k^*$ are non-trivial words whose lengths are multiples of $a$ such that $A$ has no leading zeros, $ABC\in \mathcal{S}$, $\delta(1,A)=\delta(1,AB)=\delta(1,ABC)$, and $\gamma_j(ABC)=\gamma$. Then $\gamma_j(AB^sC)=\gamma$ for every $s\ge 1$.
\end{proposition}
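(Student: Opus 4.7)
The plan is to prove this by a straightforward induction on $s$, using Lemma \ref{lem:key} as the inductive engine. The base case $s=1$ is exactly the hypothesis $\gamma_j(ABC)=\gamma$, and the case $s=2$ is the conclusion of Lemma \ref{lem:key} applied directly to the triple $(A,B,C)$, since the hypotheses of the lemma match ours verbatim. So the only real content is the inductive step from $s$ to $s+1$ for $s\ge 2$.

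Before the induction, I would record the following bookkeeping observation about state transitions. Setting $e:=\delta(1,A)=\delta(1,AB)=\delta(1,ABC)$, the equality $\delta(1,A)=\delta(1,AB)$ gives $\delta(e,B)=e$, and combining $\delta(1,AB)=e$ with $\delta(1,ABC)=e$ gives $\delta(e,C)=e$. Iterating, $\delta(e,B^t)=e$ and $\delta(e,B^tC)=e$ for every $t\ge 0$, so that
\[
\delta(1,AB^t)=\delta(1,AB^tC)=e \quad\text{for every }t\ge 0.
\]

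Now suppose by induction that $\gamma_j(AB^sC)=\gamma$, so in particular $AB^sC\in\mathcal{S}$. Apply Lemma \ref{lem:key} to the triple $(A',B',C'):=(A,B,B^{s-1}C)$. The hypotheses are all met: $A',B',C'$ are non-trivial with lengths divisible by $a$; $A'=A$ has no leading zeros; $A'B'C'=AB^sC\in\mathcal{S}$ by induction; and the state identity $\delta(1,A')=\delta(1,A'B')=\delta(1,A'B'C')=e$ holds by the observation above. The lemma then yields $\gamma_j(A'(B')^2C')=\gamma$, which is precisely $\gamma_j(AB^{s+1}C)=\gamma$, completing the induction.

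The main obstacle has already been handled inside Lemma \ref{lem:key} via the careful Lemma \ref{limsup} calculation; at the level of Proposition \ref{prop:X} the only thing to check is the transition bookkeeping to ensure the lemma may be reapplied at every stage. That verification is immediate once one observes that $B$ and $C$ both act as the identity at state $e$, so no further estimation is needed.
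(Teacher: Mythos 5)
Your proposal is correct and follows exactly the paper's own argument: induction on $s$, applying Lemma \ref{lem:key} to the triple $(A,B,B^{s-1}C)$ at each step. The explicit verification that $\delta(1,AB^s C)=e$ (via $\delta(e,B)=\delta(e,C)=e$) is a detail the paper leaves implicit, and you have checked it correctly.
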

\begin{proof}
By assumption, the result holds for $s=1$ and Lemma \ref{lem:key} gives $\gamma_j(AB^2C)=\gamma$. Now suppose that the result holds whenever $s\le m$.  Then 
$\gamma_j(AB^m C)=\gamma$ and we may apply Lemma \ref{lem:key}, leaving $A$ and $B$ unchanged and using $B^{m-1}C$ for $C$ and we get
$\gamma_j(AB^2 B^{m-1}C)=\gamma$.  The result now follows by induction.
\end{proof}
\begin{proof}[Proof of the Theorem \ref{thm:main}] 
Without loss of generality we may scale the set $\Delta$ by a positive rational number and then assume that $\Delta\subseteq \mathbb{Q}\cap [0,1]$. In particular, the maps $f_i$ take values in $[0,1]\cap \mathbb{Q}$ for $i=1,\ldots ,d$. Applying Proposition \ref{prop:X}
to $A=U$, $B=V$, and $C=W$, we see $\gamma_j(AB^m)=\gamma$ for every $m\ge 0$. If we take $w$ to be a word whose length is congruent to $j$ mod $a$, then the argument in Lemma \ref{lem:key} shows that 
$$s([AB^mw]_k) = \kappa k^{|w|+m|B|} + \kappa' (1+k^{|B|}+ \cdots + k^{(m-1)|B|}) k^{|w|} + s_{\ell}([w]_k),$$

with $\kappa=\kappa_{A,j}$ and $\kappa'=\kappa'_{B,j}$ given as in Equations (\ref{eq:kappa}) and (\ref{eq:kappaprime}); that is,
\begin{equation}\label{eq:kappa1}
\kappa=\kappa_{A,j}=  \sum_{\{v\colon |v|=|A|,~v\prec A\}} c_{\delta(1,v),j}
\end{equation}
and 
\begin{equation} \label{eq:kappaprime1}
\kappa'=\kappa'_{B,j} = \sum_{\{v\colon |v|=|B|,~v\prec V\}} c_{\delta(\ell,v),j}.
\end{equation}
Then since $f_{\ell}$ takes values in $[0.1]$, we see that $0\le s_{\ell}([w]_k)\le k^{|w|}$.  

Hence \begin{equation} \kappa k^{m|B|} + \frac{\kappa' (k^{m|B|}-1)}{(k^{|B|}-1)} \le s([AB^m w]_k)/k^{|w|} \le \kappa k^{m|B|} + \kappa' \frac{(k^{m|B|}-1)}{(k^{|B|}-1)} + 1.\label{eq:ss}\end{equation}
On the other hand, 
 \begin{align*} [A]_k k^{m|B|} + [B]_k \frac{(k^{m|B|}-1)}{(k^{|B|}-1)} & \le  [AB^m w]_k/k^{|w|}\\
  & \le  [A]_k k^{m|B|} + [B]_k \frac{(k^{m|B|}-1)}{(k^{|B|}-1)} + 1.\label{eq:rr}\end{align*}

Combining these two equations, we see
$$\left| s([AB^m w]_k)/[AB^m w]_k - \left(\kappa + \kappa'/(k^{|B|}-1)\right)/\left([A]_k + [B_k]/(k^{|B|}-1)\right)\right| =  {\rm O}(1/k^{m|B|}),$$ where the implied constant in ${\rm O}$ is independent of $m$.  
Since $\kappa$ and $\kappa'$ are rational,
 \begin{equation}
 \label{eq:YY}
 \alpha:=\left(\kappa + \kappa'/(k^{|B|}-1)\right)/\left([A]_k + [B_k]/(k^{|B|}-1)\right)\end{equation} is a rational number.  We claim that $\gamma=\alpha$.  To see this, observe that for every $m$, there are words $w_{m,n}$, $n\ge 1$, such that $s([AB^m w_{m,n}]_k)/[AB^m w_{m,n}]_k \to \gamma$ as $n\to \infty$.  On the other hand, we have shown that there is a fixed constant $C>0$ such that 
 $$|s([AB^m w_{m,n}]_k)/[AB^m w_{m,n}]_k - \alpha | < C k^{-m|B|}.$$ Taking the limit as $n$ tends to infinity then gives that
 $|\gamma-\alpha| < C k^{-m|B|}$.   Since $|B|>0$ and since this holds for every $m>0$, $\gamma=\alpha$. 
 Finally, we note that $\gamma$ is recursively computable, since Proposition \ref{thm:perron} gives that $a$ and the rational constants $c_{i,j}$ are recursively computable and hence $\kappa_{A,j}$ and $\kappa'_{B,j}$ given in Equations (\ref{eq:kappa1}) and (\ref{eq:kappaprime1}) are recursively computable for each pair of words $(A,B)$ and each $j\in \{0,\ldots ,a-1\}$ and since the limsup of $s(n)/n$ is of the form $\gamma_i(AB^m)$ for every $m\ge 1$ with some pair of words $(A,B)$ with $|A|+|B|\le (2d+1)a$ and some $i\in \{0,1,\ldots, a-1\}$, we can proceed as follows.  The limit of $\gamma_i(AB^m)$ as $m\to\infty$ is recursively computable by Equation (\ref{eq:YY}), and letting $A$ and $B$ range over non-trivial words with $|AB|\le (2d+1)a$ and $A$ having no leading zeros and letting $i$ range over $0,\ldots ,a-1$, and computing these limits and taking the maximum of these values, we see we can compute the limsup of $s(n)/n$ as $n\to\infty$. 
 \end{proof}

 \begin{proof}[Proof of Corollary \ref{cor:main2}] Define $f:\mathbb{N}\to \{0,1\}$ and $g:\mathbb{N}\to \{0,1\}$ via the rules $f(n)=1$ if $n\in S$ and $f(n)=0$ if $n\not\in S$ and $g(n)=1$ if $n\not\in S$ and $f(n)=0$ if $n\in S$.  
We let $s(n)=\sum_{i<n} f(n)$ and we let $t(n)=\sum_{i<n} g(n)$.
 Then $\gamma:=\limsup s(n)/n$ and $\gamma':=\limsup t(n)/n$ are recursively computable rational numbers by Theorem \ref{thm:main}.  Notice that $\gamma$ is just the upper density of $S$.  On the other hand, $s(n)+t(n)=n$ and if we let $\beta$ denote the lower density of $S$ 
 then \begin{align*}
 &~ \beta = \liminf s(n)/n = - \limsup (-s(n)/n) \\ &= - \limsup (t(n)-n)/n = - \left( -1+ \limsup t(n)/n\right) = 1- \gamma',\end{align*} 
 which is a recursively computable rational number.  The result follows.
 \end{proof}
 \section{An algorithm for computing the limsup}\label{computable}
In this section, we give an algorithm to compute $\limsup s(n)/n$, where $s(n)$ is the $n$-th partial sum of a $\mathbb{Q}_{\ge 0}$-valued automatic sequence $h(n)$.  We note that this algorithm essentially falls out of the proof of Theorem \ref{thm:main}, but we record it here explicitly as it may be of interest to people working with a given automatic sequence or set.  Then there is an automatic map $f:\Sigma_k^*\to \mathbb{Q}_{\ge 0}$ satisfying $f(0w)=f(w)$ and $f((n)_k)=h(n)$.

We now use the assumptions and notation of Notation \ref{notn:1} and Notation \ref{notn:2}.  We assume that we have a DFAO that accepts $w\in \Sigma_k^*$ as input, reading left to right, and gives $f(w)$ as output.  From the DFAO, we can construct automatic sequences $f=f_1,\ldots ,f_d$ that make up the left $k$-kernel of $f$.  By Proposition \ref{thm:perron} there is a recursively computable natural number $a$ and recursively computable rational numbers $c_{i,j}$ with $1\le i\le d$, $0\le j<a$ such that
$$s_i((k^{an+j})_k)/k^{an+j} = c_{i,j} +{\rm o}(1).$$

The first step is compute $a$ and the values $c_{i,j}$.  Using this terminology we get that there are words $A$ and $B$ whose lengths are multiples of $a$ with $|A|+|B|\le (2d+1)a$ and some $\ell\in \{0,1,\ldots ,a-1\}$ such that
$$\limsup_n s(n)/n = \left(\kappa_{A,j} + \kappa'_{B,j}/(k^{|B|}-1)\right)/\left([A]_k + [B_k]/(k^{|B|}-1)\right)$$ where
$\kappa_{A,j}$ and $\kappa'_{B,j}$ are as given in Equations (\ref{eq:kappa1}) and (\ref{eq:kappaprime1}).  
Moreover, each of the values of this form, by construction, occurs as a limit point of the sequence $s(n)/n$.  Consequently, one can take the maximum of the numbers of the form $$\left(\kappa_{A,j} + \kappa'_{B,j}/(k^{|B|}-1)\right)/\left([A]_k + [B_k]/(k^{|B|}-1)\right),$$ as one lets $A$ and $B$ range over the set of non-trivial words whose lengths are multiples of $a$ with $|A|+|B|\le (2d+1)a$ such that $A$ has no leading zeros and lets $j$ range over $\{0,\ldots ,a-1\}$, and the maximum of these values 
 will be $\limsup_n s(n)/n$.

As an example of how one can apply this in practice, we let $h(n)$ be the $3$-automatic sequence whose value is $1$ if the most significant ternary digit of $n$ is equal to $1$ and is zero otherwise, and we let $s(n)$ denote the $n$-th partial sum of $\{h(j)\}_{j\ge 0}$. Then there is $f:\Sigma_3^*\to \{0,1\}$ with the property that $f(0w)=f(w)$ for $w\in \Sigma_3^*$ and $f((n)_k)=h(n)$.  If we look at the left kernel of $f$, it consists of the $f=f_1,f_2,f_3$ where $f_2$ is the constant function $0$ and $f_3$ is the constant function $1$, and we have the rules
$f_1([0w]_k)=f_1([w]_k)$, $f_1([1w]_k)=f_3([w]_k)$, $f_1([2w]_k)=f_2([w]_k)$ and $f_2([iw]_k)=f_2([w]_k)$ and $f_3([iw]_k)=f_3([w]_k)$ for 
$w\in \{0,1,2\}^*$ and $i\in \{0,1,2\}$.  In this case, we find we can take $a=1$ in Proposition \ref{thm:perron} and that 
$f_1(3^n) \sim (1/2)\cdot 3^n$, $f_2(3^n)=0$, $f_3(3^n)=3^n$. By the algorithm described above, the limsup of $s(n)/n$ is the maximum over words $A,B\in \{0,1,2\}^*$ such that $A$ has no leading zeros and
$|A|+|B|\le 6$.  We let $i\in \{1,2\}$ denote the first letter of $A$ and we write $A=iA'$. Using Equations (\ref{eq:kappa1}) and (\ref{eq:kappaprime1}) we see
$$\kappa_{A,0}=k^{|A|-1}/2 +\delta_{i,2} k^{|A|} +\delta_{i,1}[A']_k,$$ and $$\kappa_{B,0}'=\delta_{i,1}[B]_k,$$ regardless of what $A'$ and $B$ are.  Thus Equation (\ref{eq:YY}) gives $\gamma_0(AB^m)$ tends to 
$$(\kappa_{A,0}+\kappa_{B,0}'/(k^{|B|}-1))/([A]_k + [B]_k/(k^{|B|}-1))$$ as $m\to\infty$.
Checking these values for the allowable $A$ and $B$, we see this is maximized when $A$ is the one-letter word $1$ and $B$ is the one-letter word $2$, in which case one gets a limit of $3/4$.  
 \section{Proof of Theorem \ref{thm:main3}}
 \label{example}
In this section we give the proof Theorem \ref{thm:main3}.  Given a set of natural numbers $S$, we let $\pi_S(x)$ denote the number of elements in $S$ that are less than $x$.  To get the final part of this characterization we need a simple lemma.
 
 \begin{lemma} Let $k\ge 2$ be a natural number and let $S$ be a $k$-automatic set.  Then 
$$\liminf_{N\to \infty} \pi_S(N)/N =0\implies \limsup_{N\to\infty} \pi_S(N)/N=0$$ and 
$$\limsup_{N\to \infty} \pi_S(N)/N =1\implies \liminf_{N\to\infty} \pi_S(N)/N=1.$$
\label{lem:01}
\end{lemma}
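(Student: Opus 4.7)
The first observation is that the two implications are equivalent by complementation: since $S^c := \mathbb{N}\setminus S$ is $k$-automatic and $\pi_{S^c}(N) = N - \pi_S(N)$, we have $\limsup \pi_{S^c}(N)/N = 1-\liminf \pi_S(N)/N$ and $\liminf \pi_{S^c}(N)/N = 1-\limsup \pi_S(N)/N$, so either implication applied to $S^c$ yields the other for $S$. I will therefore prove only the contrapositive of the first: if $\overline{d}(S) := \limsup \pi_S(N)/N > 0$, then $\underline{d}(S) := \liminf \pi_S(N)/N > 0$. Throughout I use the notation of Notation~\ref{notn:1}, writing $f=f_1,\ldots,f_d$ for the left $k$-kernel of $\chi_S$ and $s_i(k^n) := \sum_{|w|=n} f_i(w)$, so that $\pi_S(k^n) = s_1(k^n)$.

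The main tool is to apply Proposition~\ref{thm:perron} to each $f_i$---with a common $a$ and common $\beta \in (0,k)$---obtaining rationals $c_{i,j}$ with $s_i(k^{an+j}) = c_{i,j} k^{an+j} + O(\beta^{an})$ uniformly in $i, j$. Splitting
$$s_1(k^{L+an+j}) = \sum_{|v|=L} s_{\delta(1,v)}(k^{an+j})$$
and comparing leading-order asymptotics on both sides yields the key identity
$$c_{1,(L+j)\bmod a} = k^{-L}\sum_{|v|=L} c_{\delta(1,v),j}, \qquad L\ge 0,\ 0\le j<a.$$
Since $f(0w)=f(w)$ forces $\delta(1,0)=1$, iteration gives $\delta(1,0^L)=1$, so the term $v=0^L$ in the sum contributes $c_{1,j}$; as all other terms are nonnegative, this yields $c_{1,(L+j_0)\bmod a} \ge k^{-L}c_{1,j_0}$. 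As $L$ ranges over $\{0,1,\ldots,a-1\}$, the residue $(L+j_0)\bmod a$ covers every element of that set, so positivity of any single $c_{1,j_0}$ forces positivity of every $c_{1,j}$.

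The proof then concludes by bracketing $\pi_S(N)$ between $s_1(k^{L-1})$ and $s_1(k^L)$ for $N\in [k^{L-1},k^L)$. First, if every $c_{1,j}$ were zero, the error estimate in Proposition~\ref{thm:perron} would give $s_1(k^L)/k^L \to 0$, so $\pi_S(N)/N \le s_1(k^L)/k^{L-1} = k\cdot s_1(k^L)/k^L \to 0$, contradicting $\overline{d}(S)>0$; thus some $c_{1,j_0}>0$, and by the previous paragraph $\epsilon := \min_j c_{1,j}>0$. Second, from $\pi_S(N) \ge \pi_S(k^{L-1}) = s_1(k^{L-1})$ we obtain $\pi_S(N)/N \ge s_1(k^{L-1})/k^L = (1/k)\,s_1(k^{L-1})/k^{L-1}$, which tends to $c_{1,(L-1)\bmod a}/k \ge \epsilon/k$ as $N\to\infty$. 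Hence $\underline{d}(S) \ge \epsilon/k > 0$, completing the proof.

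The main obstacle is the ``either all zero or all positive'' dichotomy for $\{c_{1,j} : 0\le j<a\}$; once this is in place, the endpoint estimates at $k^{L-1}$ and $k^L$ are elementary sandwich arguments. The crux will be recognizing that the term $v=0^L$ in the key identity returns to state~$1$ (courtesy of the convention $f(0w)=f(w)$), which is precisely what lets one propagate positivity of a single $c_{1,j_0}$ across all residues of $L$ modulo~$a$.
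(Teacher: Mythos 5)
Your proof is correct, and its skeleton matches the paper's: both arguments rest on Proposition~\ref{thm:perron}, sandwich an arbitrary $N$ between nearby powers of $k$, and obtain the second implication from the first by passing to the complement. The one genuine difference is in the middle. You first establish the dichotomy that the constants $c_{1,0},\ldots,c_{1,a-1}$ are either all zero or all positive, via the identity $c_{1,(L+j)\bmod a}=k^{-L}\sum_{|v|=L}c_{\delta(1,v),j}$ and the observation that $\delta(1,0^L)=1$; this is a valid and mildly interesting structural fact, but it is not needed. The paper sidesteps it by sandwiching more coarsely: if a single $c_i>0$, then every large $N$ satisfies $k^{an+i}\le N<k^{a(n+1)+i}$ for some $n$, whence $\pi_S(N)/N\ge \pi_S(k^{an+i})/k^{a(n+1)+i}\sim c_i/k^a>0$, giving $\liminf>0$ directly at the cost of a factor $k^a$ rather than $k$. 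So your extra lemma buys a sharper lower bound ($\epsilon/k$ versus $c_i/k^a$) that the statement does not require. One small point of hygiene: in your final step the quantity $s_1(k^{L-1})/k^{L-1}$ does not literally ``tend to $c_{1,(L-1)\bmod a}$'' since the target depends on $L$; what you actually use is $s_1(k^{L-1})/k^{L-1}\ge \min_j c_{1,j}+o(1)$, which is what your uniform error term delivers, so the argument stands.
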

\begin{proof}
By Proposition \ref{thm:perron}, there exist $\beta\in (0,k)$, a positive integer $a$, and nonnegative rational numbers $c_0,\ldots ,c_{a-1}$ such that
$\pi_S(k^{an+j})= c_j k^{an+j}+{\rm O}(\beta^{an})$. We claim that if
$\liminf_{N\to \infty} \pi_S(N)/N=0$ then $c_0,\ldots ,c_{a-1}$.  To see this, suppose that this is not the case.  Then there is some $i$ such that $c_i>0$.  Then for every $N$ sufficiently large there is some $n$ such that
$k^{an+i} \le N < k^{a(n+1)+i}$.  Hence 
$\pi_S(N)/N \ge \pi_S(k^{an+i})/k^{a(n+1)+i} \sim c_i/k^a$ as $n\to\infty$.  It follows that if some $c_i>0$ then $\liminf \pi_S(N)/N>0$, and so the claim follows.  Thus if $\liminf_{N\to \infty} \pi_S(N)/N=0$ then we have $c_0=\cdots = c_{a-1}=0$. We claim that this then gives that $\limsup \pi_S(N)/N = 0$.  To see this, suppose that $\limsup \pi_S(N)/N =\gamma >0$.  Then there are infinitely many $N$ such that $\pi_S(N)> \gamma N/2$.  For such $N$ we have an $n$, depending upon $N$, such that $k^{an} \le N < k^{an+a}$ and so
$$\pi_S(k^{an+a})/k^{an+a} > \pi_S(N)/(k^a N) > \gamma/(2k^a),$$ contradicting the fact that $c_0=0$.  Thus we have shown that 
$$\liminf \pi_S(N)/N = 0 \implies \limsup \pi_S(N)/N=0.$$
Similarly, if we let $T$ denote the complement of $S$ then $T$ is automatic and $\limsup \pi_S(N)/N = 1$ if and only if $\liminf \pi_T(N)/N=0$, and so  if $\limsup \pi_S(N)/N=1$ then $\liminf \pi_T(N)/N=0$ and hence $\limsup \pi_T(N)/N=0$ and thus $\liminf \pi_S(N)/N=1$.  This completes the proof.\end{proof}
 
 \begin{proof}[Proof of Theorem \ref{thm:main3}] By Lemma \ref{lem:01} and Corollary \ref{cor:main2}, if $S$ is a $k$-automatic having lower density and upper density $\alpha$ and $\beta$ respectively then either $(\alpha,\beta)\in \{(0,0),(1,1)\}$ or $\alpha,\beta$ are rational numbers with $0<\alpha\le \beta<1$.  We notice that the empty set has upper and lower density $0$ and $\mathbb{N}$ has upper and lower density $1$, and so to complete the proof, it suffices to show that whenever $\alpha,\beta$ are rational numbers satisfying $0<\alpha \le \beta <1$ then there is a $k$-automatic set whose lower density is $\alpha$ and whose upper density is $\beta$.  Since a set is $k$-automatic if and only if it is $k^m$-automatic for each positive integer $m$, we may replace $k$ with a power and assume that $k\alpha\ge \beta$ and that $k\beta < (k-1)$.  We let
$$\alpha'= (k\alpha -\beta)/(k-1)\qquad {\rm and} \qquad \beta' =(k\beta -\alpha)/(k-1).$$  Since $k\alpha>\beta\ge \alpha$, we see $\alpha',\beta'>0$.  Also, since $k\beta, k\alpha<k-1$ we have $\alpha',\beta'< 1$.
  We have
 \begin{equation}
 k\alpha' + \beta'=(k+1)\alpha\qquad {\rm and}\qquad k\beta'+\alpha'=(k+1)\beta.
 \end{equation}
  Now let $A$, $B$, and $C$ be positive integers with $C>A,B$ and $A/C=\alpha'$ and $B/C=\beta'$.  
  We let $T_0$ be the set of natural numbers that  
 are either $0,1,\ldots ,A-2$, or $A-1$ mod $C$; we let $T_1$ be  the set of natural numbers that  
 are either $0,1,\ldots ,B-2$, or $B-1$ mod $C$.  We let $U_0$ be the set of natural numbers whose base-$k$ expansion has even length and we let $U_1$ be the set of natural numbers whose base-$k$ expansion has odd length.  We let $S=(U_0\cap T_0)\cup (U_1\cap T_1)$. Since each of $U_0,U_1,T_0,T_1$ are $k$-automatic sets, so is $S$ as such sets are closed under finite intersections and unions. For an interval $I\subseteq [k^{2n},k^{2n+1})$ the number of elements in $S\cap I$ satisfies
\begin{equation}
 \frac{B\cdot \#(I\cap \mathbb{N})}{C} -C \le \#(S\cap I) \le \frac{B\cdot  \#(I\cap \mathbb{N})}{C} +C \end{equation}
 Similarly, for an interval $I\subseteq [k^{2n+1},k^{2n+2})$ the number of elements in $S\cap I$ satisfies
\begin{equation}
 \frac{A\cdot \#(I\cap \mathbb{N})}{C} -C \le \#(S\cap I) \le \frac{A\cdot \#(I\cap \mathbb{N})}{C}  +C \end{equation}
 From these inequalities, it is straightforward to deduce that $\pi_S(k^{2n})$ is asymptotic to
 $$(A/C)(1+(k^2-k)  +\cdots + (k^{2n}-k^{2n-1}))+(B/C)((k-1)+(k^3-k^2)+\cdots + (k^{2n-1}-k^{2n-2})),$$
 which gives
  $$(A/C)(k^{2n+1}+1)/(k+1) + (B/C) (k^{2n}-1)/(k+1).$$
 
From this we obtain the asymptotic result
\begin{equation}
\pi_S(k^{2n})\sim \alpha' k^{2n+1}/(k+1) + \beta' k^{2n}/(k+1) = \alpha k^{2n}.\end{equation} Similarly,
\begin{equation}
\pi_S(k^{2n+1})\sim \beta' k^{2n+2}(k+1) + \alpha' k^{2n+1}/(k+1) =\beta k^{2n+1}\end{equation} as $n\to \infty$.
It follows that for $N\in [k^{2n+1}+1,k^{2n}]$ we have
$$\pi_S(N)/N \sim \beta k^{2n+1}/N + \beta' - \beta' k^{2n+1}/N$$ as $n\to\infty$.
If we fix $n$ and let $N$ range over the interval $ [k^{2n+1}+1,k^{2n}]$, then since the function 
$\beta k^{2n+1}/x + \beta' - \beta' k^{2n+1}/x$ has derivative of the form $\kappa x^{-2}$, it is monotonic on the interval $ [k^{2n+1},k^{2n+2}]$
and hence the maximum and minimum are attained at the end points.  When $N=k^{2n+1}$, $\beta k^{2n+1}/N + \beta' - \beta' k^{2n+1}/N$ is equal to $\beta$, and at $N=k^{2n+2}$ it is equal to 
$\beta/k + \beta'(1-1/k)=\alpha$ and so 
$$\alpha = \liminf_{n\to\infty} \inf_{N\in [k^{2n+1},k^{2n}]} \pi_S(N)/N$$ and
$$\beta = \limsup_{n\to\infty} \sup_{N\in [k^{2n+1},k^{2n}]} \pi_S(N)/N.$$
Similarly,  
$$\alpha = \liminf_{n\to\infty} \inf_{N\in [k^{2n},k^{2n+1}]} \pi_S(N)/N$$ and
$$\beta = \limsup_{n\to\infty} \sup_{N\in [k^{2n},k^{2n+1}]} \pi_S(N)/N.$$
It follows that $\alpha$ and $\beta$ are respectively the liminf and limsup of $\pi_S(N)/N$, as claimed.  
\end{proof}
\section{Concluding remarks}\label{conc}
We have established that the upper and lower densities of an automatic set are recursively computable rational numbers.  It is natural to ask whether similar results hold when one looks at larger classes of sets.  A generalization of automatic sets is sets associated to a morphic word on the alphabet $\{0,1\}$.  A result of Cobham \cite{Cobham} shows that automatic sets are precisely those corresponding to \emph{uniform} morphisms, and so these morphic sets form a strictly larger class. In this case, the density of morphic sets need not be rational, if it exists.  
For example, the Fibonacci word, which is the right-infinite word $01001\cdots$ that is the unique fixed point of the morphism $0\mapsto 01$ and $1\mapsto 0$ whose first letter is $0$, corresponds to the set $\{1,3,4,\ldots \}$ (i.e., the elements of the set are the positions where the $0$'s occur in the sequence).  This set has density $1/\rho^2$, where $\rho=(\sqrt{5}+1)/2$ \cite[Proposition 2.1.10]{L}.  We suspect that the densities, when they exist, and the upper and lower densities of morphic sets should be algebraic numbers.  We believe the techniques in this paper along with those given in \cite{Bell} might be useful in establishing these facts, although we leave this as a question for others to think about.

Another interesting generalization of automatic sets are those produced via push-down automata; i.e., context-free subsets of $\mathbb{N}$, where we once again assume that the subset is formed by taking the natural numbers whose base-$k$ expansions form a context-free sublanguage of $\{0,1,\ldots ,k-1\}^*$ for some $k\ge 2$.  Here there is an interesting dichotomy that arises: unambiguous context-free and ambiguous context-free.  The former case is much better behaved and work of Chomsky and Sch\"utzenberger \cite{CS}, along with basic asymptotic results for algebraic functions, shows that if the density of such a set  of natural numbers exists then it is necessarily an algebraic number.  On the other hand, the question of whether upper and lower densities of unambiguous context-free subsets of $\mathbb{N}$ are algebraic or not is apparently open.  For ambiguous context-free sets, the behaviour can be much more pathological. Work of Kemp \cite{Kemp}, with some small additional amount of arguing, shows that such densities can be transcendental.

 \section*{Acknowledgments}
 I thank Jeffrey Shallit for bringing this problem to my attention and for making many useful comments and for bringing my attention to the work of Kemp.  I also thank the referee for reading the paper carefully and giving numerous helpful comments and suggestions. 
 
\end{document}